\documentclass[conference,letterpaper]{IEEEtran}
\usepackage[utf8]{inputenc} 
\usepackage[T1]{fontenc}
\usepackage{url}
\usepackage{ifthen}
\usepackage{cite}
\usepackage[cmex10]{amsmath}
\usepackage{graphicx}
\usepackage{amssymb}
\usepackage{amsthm}
\usepackage{mathtools}
\usepackage{hyperref}
\usepackage{bbm}
\usepackage{tikz}
\usepackage[most]{tcolorbox}
\usepackage{algorithm}
\usepackage{algpseudocode}

\usetikzlibrary{patterns,calc,intersections}
\usetikzlibrary{shapes,shapes.misc,arrows.meta,calc}

\newtheorem{theorem}{Theorem}
\theoremstyle{definition}

\newtheorem{lemma}{Lemma}
\newtheorem{example}{Example}

\begin{document}

\title{Learning from Acceptance: Cumulative Regret in the Game of Coding}

\author{
  \IEEEauthorblockN{Hanzaleh Akbari Nodehi, Parsa Moradi, and Mohammad Ali Maddah-Ali}
  \IEEEauthorblockA{University of Minnesota, Twin Cities, Minneapolis, MN, USA}
}
\maketitle

\begin{abstract}
Classical coding-theoretic guarantees often rely on trust assumptions, such as requiring sufficiently many honest nodes compared with adversarial ones. These assumptions are difficult to enforce in open decentralized systems where participants are not centrally certified. At the same time, such environments often contain incentive mechanisms: participants may be rewarded only when their submitted data are accepted and the system remains functional. This changes the role of an adversary. Rather than acting as a pure saboteur, a strategic adversary may submit data that are consistent enough to be accepted while still degrading the quality of the final estimate.

The game-of-coding framework models this strategic interaction between a data collector (DC) and an adversary. Existing works on the game of coding mostly consider the complete-information case, where the DC knows how the adversary trades off acceptance and estimation error. In this paper, we study an incomplete-information version of the game of coding in which the DC, acting as a Stackelberg leader, does not know the adversary's utility trade-off and must learn through repeated interaction. Prior work on the unknown-adversary setting considered an explore-then-commit objective, where only the final selected acceptance rule is evaluated. In contrast, we study the full learning trajectory: every acceptance rule used during the algorithm is executed and contributes to performance. We propose an algorithm that refines its search around promising acceptance rules, prove that it achieves sublinear cumulative regret, and evaluate its performance through numerical experiments.
\end{abstract}

\section{Introduction}
\label{sec:introduction}

Classical coding theory provides fundamental tools for reliable communication, storage, and computation, but it relies on strict trust assumptions. Successful recovery is typically guaranteed only when honest nodes sufficiently outnumber adversarial ones. For example, in a system with a set of honest nodes $\mathcal{H}$ and adversarial nodes $\mathcal{T}$, standard repetition coding requires $|\mathcal{H}| \geq |\mathcal{T}| + 1$ to guarantee recovery. Similarly, Reed--Solomon codes, which encode $K$ symbols into $N$ coded symbols, require $|\mathcal{H}| \geq |\mathcal{T}| + K$~\cite{SudanBook}. Comparable threshold requirements also appear in coded computation, as well as analog and real-valued settings~\cite{ZamirCoded, jahani2018codedsketch, roth2020analog, moradi2025general, dutta2019optimal, dutta2016short, yu2017polynomial, yu2020entangled, yu2019lagrange}.

These trust assumptions are difficult to enforce in open decentralized systems where participants are not centrally certified~\cite{sliwinski2019blockchains, han2021fact, gans2023zero}. This issue is particularly relevant in blockchain-based platforms~\cite{bitcoin2008bitcoin, buterin2013ethereum, ruoti2019sok}, including decentralized machine learning and oracle systems~\cite{shafay2023blockchain, ding2022survey, kayikci2024blockchain, tian2022blockchain, salah2019blockchain}.  However, these platforms are not purely worst-case environments. Participants are incentive-driven and may receive rewards only when their contribution is accepted and the system remains functional. Therefore, an adversarial participant may not wish to destroy recoverability completely. Instead, it may try to keep the system live while still degrading the quality of the estimate or computation. This incentive structure suggests that adversarial behavior in these settings should be modeled strategically, rather than purely as arbitrary corruption.

To capture this strategic behavior, recent literature in coding theory introduced the game-of-coding \cite{GoCJournal, nodehi2026gamevector, GoDSybil, nodehi2026game, nodehi2025unknown}. In this framework, a data collector (DC) interacts with an adversary, and the utility of each player depends on two key quantities: the probability that the DC accepts the submitted reports and the estimation error conditioned on acceptance. This creates a trade-off. The adversary wants to increase the error, but it must avoid causing rejection and losing its reward. Conversely, the DC wants to reduce the  error while maintaining enough acceptance probability to keep the system live.

A basic two-node model of the game of coding is illustrated in Fig.~\ref{fig:two_node_model}. The system consists of one honest node, one adversarial node, and a DC. The DC aims to estimate an unknown random variable $\mathbf{u}$, but it does not observe $\mathbf{u}$ directly. Instead, it receives two reported values. The honest node sends $\mathbf{y}_h=\mathbf{u}+\mathbf{n}_h$, where the honest noise is bounded by $\Delta$, while the adversarial node sends $\mathbf{y}_a=\mathbf{u}+\mathbf{n}_a$, where the distribution of $\mathbf{n}_a$ is strategically chosen by the adversary. The DC receives $\underline{\mathbf{y}}=\{\mathbf{y}_1,\mathbf{y}_2\}$ and does not know which report is honest.

A natural consistency check accepts the reports only if $|\mathbf{y}_1-\mathbf{y}_2|\leq 2\Delta$, since any two honest reports would satisfy this bound. However, this rule can be overly conservative. Reports outside this narrow region may still yield a sufficiently accurate estimate of $\mathbf{u}$, and rejecting them may unnecessarily reduce system functionality. Therefore, the DC may instead use a more flexible acceptance rule of the form $|\mathbf{y}_1-\mathbf{y}_2|\leq \eta\Delta$. In this setting, the DC acts as a Stackelberg leader by committing to the threshold $\eta$, while the adversary observes $\eta$ and strategically chooses its noise distribution in response. The equilibrium and optimal strategies for this game were first analyzed in~\cite{GoCJournal}, and subsequent works extended the framework to vector-valued computations, multi-node systems, Sybil attacks, and advanced coding techniques~\cite{nodehi2026gamevector, GoDSybil, nodehi2026game}.

The works above consider a complete-information game, where the DC and the adversary know each other's utility. This assumption is restrictive in many practical settings. It is reasonable to assume that the adversary know the DC's utility, but the DC may not know how the adversary trades off error and probability of acceptance. As a result, the DC may not know the adversary's best response to a committed threshold $\eta$. This leads to an incomplete-information game, where the DC must learn the effect of its committed thresholds through repeated interaction. Learning in games with unknown follower utilities has been studied in several settings~\cite{conitzer2006computing, gan2023robust, letchford2009learning, blum2014learning, peng2019learning, sessa2020learning, balcan2015commitment, haghtalab2022learning, dong2018strategic, kleinberg2003value}.

The unknown-adversary version of the game of coding was first studied in~\cite{nodehi2025unknown}. In that formulation, the DC interacts with the adversary for several rounds and is evaluated only through the final selected threshold. This explore-then-commit objective is appropriate when the learning phase is separated from the deployment phase. In many decentralized systems, however, this separation is not available. Every threshold used during learning is actually executed, and each accepted or rejected report affects the system's performance. Motivated by this observation, we study a trajectory-level objective in which every round contributes to the DC's performance. We measure performance by cumulative regret with respect to the best fixed threshold in hindsight.

This  objective connects the problem to the multi-armed bandit framework, where a learner repeatedly chooses actions, observes feedback, and aims to minimize regret relative to the best fixed action~\cite{LaiRobbins1985, AuerCesaBianchiFischer2002, BubeckCesaBianchi2012, LattimoreSzepesvari2020, Slivkins2019}. Since the DC's threshold $\eta$ belongs to a continuum, our setting is especially related to continuum-armed and metric-space bandits~\cite{Agrawal1995continuum, Kleinberg2004continuum, AuerOrtnerSzepesvari2007continuum, KleinbergSlivkinsUpfal2008metric, BubeckMunosStoltzSzepesvari2011}. Nevertheless, the feedback in our problem is different from standard bandit. In a classical bandit problem, after choosing an action, the learner observes a reward sample and can use it to evaluate that action. In our setting, after the DC commits to a threshold, the adversary strategically responds, and the DC observes only whether the submitted reports were accepted or rejected. The DC does not observe its utility, nor does it observe the estimation error in that round, because the true value of $\mathbf{u}$ is unknown. Thus, the feedback is both partial and strategic: the observed binary acceptance event is generated after the adversary has responded to the DC's committed threshold.

To address this challenge, we exploit a structural property of the game of coding. For any threshold committed by the DC and any corresponding response of the adversary, the induced probability of acceptance and the estimation error are not arbitrary. They lie on a curve determined by the statistical model and the acceptance rule. Importantly, this curve is independent of the utility functions of both players. Therefore, even though the DC cannot observe the estimation error directly, it can use repeated accept/reject observations to estimate the probability of acceptance and then map this estimate to the corresponding induced utility.

Using this idea, we design an adaptive threshold-learning algorithm. The algorithm maintains a set of active thresholds and updates an estimate of the DC's utility for each active threshold using only binary acceptance feedback. It assigns larger uncertainty bonuses to thresholds that have been sampled less often and gradually refines the search by activating new thresholds in regions that are not yet well represented. As a result, the algorithm spends more time near thresholds that may be close to optimal while still exploring insufficiently sampled regions of the threshold space.

The contributions of this paper are as follows. First, we formulate the unknown-adversary game of coding as an online Stackelberg learning problem with a cumulative regret objective, where every threshold used during learning affects performance. Second, we show how the DC can learn from binary
acceptance feedback, even though it does not directly observe its utility. Third, we design an adaptive threshold-learning policy for the continuous threshold space and prove that it achieves sublinear regret. Finally, we provide numerical experiments showing that the proposed policy outperforms the explore-then-commit baseline of~\cite{nodehi2025unknown}.

The paper is organized as follows. Section~\ref{sec:problem-formulation} presents the problem formulation. Section~\ref{sec:preliminaries} provides the preliminaries. Section~\ref{sec:red-main-result} presents the main theorem, which is proved in Section~\ref{sec:red-proof-main-result}. Section~\ref{sec:experiments} presents the numerical experiments.

\section{Problem Formulation}
\label{sec:problem-formulation}

In this section we present the problem formulation. We consider a system consisting of two nodes\footnote{For simplicity of exposition, in this conference paper we focus on system with two nodes. Based on \cite{GoDSybil}, the result of this paper can be directly used for system with more than two nodes.} and a DC (See Fig.~\ref{fig:two_node_model}). The DC aims to estimate $\mathbf{u}$ but does not have direct access to it. Instead, it relies on the two nodes. One of the nodes is honest, while the other is adversarial. The identity of the adversarial node is hidden from the DC. The honest node sends $\mathbf{y}_h=\mathbf{u}+\mathbf{n}_h$, where the probability density function (PDF) of $\mathbf{n}_h$, denoted by $f_{\mathbf{n}_h}$, is symmetric, and $\Pr\left(|\mathbf{n}_h|>\Delta\right)=0$ for some $\Delta\in\mathbb{R}$. The adversary sends $\mathbf{y}_a=\mathbf{u}+\mathbf{n}_a$, where $\mathbf{n}_a$ is independent of $\mathbf{u}$. The PDF of $\mathbf{n}_a$, denoted by $g(.)$, is chosen by the adversary. The DC is unaware of this choice. We assume that $M$, $\Delta$, and the distribution of $\mathbf{n}_h$ are known to all players, and $\Delta\ll M$.

\begin{figure}[htbp]
    \centering
    \resizebox{0.78\columnwidth}{!}{%
        \begin{tikzpicture}[>=Stealth, thick]

            \tikzset{
                block/.style={draw, rectangle, minimum height=10mm, minimum width=24mm, align=center, fill=white},
                diamondblock/.style={draw, diamond, aspect=1.7, minimum width=36mm, minimum height=20mm, align=center, fill=white, inner sep=1pt},
                rejectblock/.style={draw, rounded rectangle, minimum height=9mm, minimum width=20mm, align=center, fill=white}
            }

            \def\boxL{0}
            \def\boxR{9.2}
            \def\boxT{2.2}
            \def\boxB{-2.2}

            \def\nodeX{-4.2}
            \def\nodeY{1.3}

            \def\diaX{3.0}
            \def\estX{7.3}
            \def\rejY{-3.0}

            \draw[thick, fill=gray!5] (\boxL,\boxT) rectangle (\boxR,\boxB);
            \node[anchor=south east, font=\large] at (\boxR-0.1,\boxB+0.15) {Data Collector};

            \node[block] (nodeh) at (\nodeX,\nodeY) {Honest node};
            \node[block] (nodea) at (\nodeX,-\nodeY) {Adversarial node};

            \node[diamondblock] (decision) at (\diaX,0) {$|\mathbf{y}_h-\mathbf{y}_a|\leq \eta\Delta$};

            \node[block] (estimation) at (\estX,0) {Estimate $\mathbf{u}$};
            \node[rejectblock] (reject) at (\diaX,\rejY) {Reject};

            \node[font=\Large] (output) at (\boxR+1.1,0) {$\hat{\mathbf{u}}$};

            \draw[->] (nodeh.east) -- node[above] {$\mathbf{y}_h=\mathbf{u}+\mathbf{n}_h$} (\boxL,\nodeY);
            \draw[->] (\boxL,\nodeY) -- (decision.155);

            \draw[->] (nodea.east) -- node[above] {$\mathbf{y}_a=\mathbf{u}+\mathbf{n}_a$} (\boxL,-\nodeY);
            \draw[->] (\boxL,-\nodeY) -- (decision.205);

            \draw[->] (decision.east) -- node[above] {Yes} (estimation.west);
            \draw[->] (decision.south) -- node[right] {No} (reject.north);

            \draw[->] (estimation.east) -- (output.west);

        \end{tikzpicture}%
    }
    \vspace{-10pt}
   \caption{Two-node game-of-coding model. The DC first commits to $\eta$. The honest node reports $\mathbf{y}_h=\mathbf{u}+\mathbf{n}_h$, where $|\mathbf{n}_h| \leq \Delta$, while the adversarial node reports $\mathbf{y}_a=\mathbf{u}+\mathbf{n}_a$, where the distribution of $\mathbf{n}_a$ is chosen arbitrarily by the adversary in response to $\eta$. The DC accepts the reports if $|\mathbf{y}_h-\mathbf{y}_a|\leq \eta\Delta$ and then forms an estimate $\hat{u}$; otherwise, it rejects them.}
    \label{fig:two_node_model}
\end{figure}
The DC receives $\underline{\mathbf{y}}\triangleq\{\mathbf{y}_1,\mathbf{y}_2\}$ and evaluates it as follows:
\begin{itemize}
    \item First, the DC accepts $\underline{\mathbf{y}}$ if and only if $|\mathbf{y}_1-\mathbf{y}_2|\leq \eta\Delta$, for $\eta\geq 2$. We denote by $\mathcal{A}_{\eta}$ the event that the inputs are accepted, and define $\mathsf{PA}(g(.),\eta)\triangleq \Pr(\mathcal{A}_{\eta})$.
    
    \item Second, if the inputs are accepted, the DC outputs $\mathsf{est}(\underline{\mathbf{y}})$ as its estimate of $\mathbf{u}$, where $\mathsf{est}:\mathbb{R}^2\to\mathbb{R}$. The estimation cost is defined as $\mathsf{MSE}\left(\mathsf{est}(.,.),g(.),\eta\right)\triangleq \mathbb{E}[(\mathbf{u}-\mathsf{est}(\underline{\mathbf{y}}))^2\mid \mathcal{A}_{\eta}]$. The DC chooses the estimator $\mathsf{est}^{*}_{\eta,g}\triangleq \underset{\mathsf{est}:\mathbb{R}^2\to\mathbb{R}}{\arg\min}~\mathsf{MSE}\left(\mathsf{est}(.,.),g(.),\eta\right)$. We define 
    \begin{align}
        \mathsf{MMSE}(g(.),\eta)
        \triangleq
        \mathsf{MSE}\left(\mathsf{est}^{*}_{\eta,g}(.,.),g(.),\eta\right).
    \end{align}
\end{itemize}

The interaction between the DC and the adversary is modeled as a two-player Stackelberg game~\cite{von2010market}. The DC utility and the adversary utility are respectively defined as $\mathsf{U}_{\mathsf{DC}}(g(.),\eta)\triangleq \mathsf{Q}_{\mathsf{DC}}\left(\mathsf{MMSE},\mathsf{PA}\right)$ and $\mathsf{U}_{\mathsf{AD}}(g(.),\eta)\triangleq \mathsf{Q}_{\mathsf{AD}}\left(\mathsf{MMSE},\mathsf{PA}\right)$. The function $\mathsf{Q}_{\mathsf{DC}}:\mathbb{R}^2\to\mathbb{R}$ is non-increasing with respect to $\mathsf{MMSE}$ and non-decreasing with respect to $\mathsf{PA}$. Conversely, $\mathsf{Q}_{\mathsf{AD}}:\mathbb{R}^2\to\mathbb{R}$ is strictly increasing with respect to both arguments. We assume that the adversary knows both utility functions, whereas the DC  does not know $\mathsf{Q}_{\mathsf{AD}}$.

We assume that the DC chooses  $\eta \in \Lambda_{\mathsf{DC}}\triangleq [\eta_{\min},\eta_{\max}]$, where $2\leq \eta_{\min}<\eta_{\max}$, and that the adversary selects its strategy from $\Lambda_{\mathsf{AD}}\triangleq \left\{g(.)~\middle|~g(.):\mathbb{R}\to\mathbb{R}\text{ is a valid PDF}\right\}$.
For each $\Lambda_{\mathsf{DC}}$ to which the DC commits, the adversary observes $\eta$ and chooses a best response $\mathcal{B}^{\eta}
    \triangleq
    \arg\max
    ~\mathsf{U}_{\mathsf{AD}}(g(.),\eta)$.
Elements of $\mathcal{B}^{\eta}$ provide the same utility to the adversary, but they may induce different utilities for the DC. We define
\begin{align}
    \Bar{\mathcal{B}}^{\eta}
    \triangleq
    \underset{g(.)\in\mathcal{B}^{\eta}}{\arg\min}
    ~\mathsf{U}_{\mathsf{DC}}(g(.),\eta).
\end{align}
For every $\eta \in \Lambda_{\mathsf{DC}}$, fix an arbitrary $g^*_{\eta}(.)\in\Bar{\mathcal{B}}^{\eta}$ and define $\alpha(\eta)\triangleq \mathsf{PA}(g^*_{\eta}(.),\eta)$ and $m(\eta)\triangleq \mathsf{MMSE}(g^*_{\eta}(.),\eta)$.
Thus, the induced utility of the DC when it commits to $\eta$ is
\begin{align}
    \mathsf{U}(\eta)
    \triangleq
    \mathsf{U}_{\mathsf{DC}}(g^*_{\eta}(.),\eta)
    =
    \mathsf{Q}_{\mathsf{DC}}
    \left(
    m(\eta),
    \alpha(\eta)
    \right).
    \label{def:induced_utility}
\end{align}
We define the optimal utility as $\mathsf{U}^*_{\Lambda_{\mathsf{DC}}}
    \triangleq
    \sup_{\eta \in \Lambda_{\mathsf{DC}}}\mathsf{U}(\eta)$.

We now define the multi-round interaction. At the beginning of round $t$, the DC has access to the history $\mathcal{H}_{t-1}=\{\eta_s,\underline{\mathbf{y}}_s,\mathbbm{1}\{\mathcal{A}_{\eta_s}\}\}_{s=1}^{t-1}$. A policy $\pi$ of the DC is a sequence of maps $\pi_t:\mathcal{H}_{t-1}\to\Lambda_{\mathsf{DC}}$, and the DC commits to $\eta_t=\pi_t(\mathcal{H}_{t-1})$. After observing $\eta_t$, the adversary myopically chooses $g_t(.)\in\Bar{\mathcal{B}}^{\eta_t}$. The term \emph{myopic} means that the adversary maximizes its immediate utility for the current round $t$ with respect to the value of $\eta_t$. The DC observes the reports $\underline{\mathbf{y}}_t$ and the acceptance indicator $Z_t\triangleq \mathbf{1}\{\mathcal{A}_{\eta_t}\}$. Conditioned on $\mathcal{H}_{t-1}$ and $\eta_t$, the acceptance indicator satisfies
\begin{align}
    \mathbb{E}\left[Z_t\mid \mathcal{H}_{t-1},\eta_t\right]
    =
    \alpha(\eta_t).
    \label{eq:acceptance_feedback}
\end{align}

The objective is to design a policy $\pi=\{\pi_t\}_{t\geq 1}$ whose performance is good over the entire trajectory. We measure the performance of $\pi$, by the cumulative regret with respect to the best fixed acceptance parameter in $\Lambda_{\mathsf{DC}}$:
\begin{align}
    \mathsf{Reg}_T(\pi)
    =
    \mathbb{E}_{\pi}
   [
    \sum_{t=1}^{T}
    (
    \mathsf{U}^*_{\Lambda_{\mathsf{DC}}}-\mathsf{U}(\eta_t)
    )
    ].\label{def:regret}
\end{align}


\section{Preliminaries}
\label{sec:preliminaries}

For any $\eta\in\Lambda_{\mathsf{DC}}$ and $\alpha\in[0,1]$, we define
\begin{align*}
    c_{\eta}(\alpha)
    \triangleq
    \max_{g(.)\in\Lambda_{\mathsf{AD}}}
    \mathsf{MMSE}(g(.),\eta)
    \quad
    \text{s.t.}\quad
    \mathsf{PA}(g(.),\eta)\geq \alpha .
\end{align*}
Note that the function $c_{\eta}(.)$ does not depend on $\mathsf{Q}_{\mathsf{DC}}$ or $\mathsf{Q}_{\mathsf{AD}}$. A key characterization from the game-of-coding framework~\cite{GoCJournal, nodehi2026gamevector, GoDSybil, nodehi2026game, nodehi2025unknown} states that, for any $g^*_{\eta}(.)\in\Bar{\mathcal{B}}^{\eta}$, we have $\mathsf{MMSE}(g^*_{\eta}(.),\eta)=c_{\eta}\left(\mathsf{PA}(g^*_{\eta}(.),\eta)\right)$. This implies that $m(\eta)=c_{\eta}(\alpha(\eta))$.
Define  $F_{\eta}(\alpha)\triangleq \mathsf{Q}_{\mathsf{DC}}\left(c_{\eta}(\alpha),\alpha\right)$. Based on  \eqref{def:induced_utility}, we have $\mathsf{U}(\eta)=\mathsf{Q}_{\mathsf{DC}}\left(c_{\eta}(\alpha(\eta)),\alpha(\eta)\right)=F_{\eta}(\alpha(\eta))$.

This identity is central to our algorithm. The DC does not know $\mathsf{Q}_{\mathsf{AD}}$, does not know the adversary's selected response $g^*_{\eta}(.)$, and does not observe samples with mean $\mathsf{U}(\eta)$. Instead, when the DC commits to $\eta_t$, it observes the binary acceptance indicator $Z_t$, whose conditional expectation is $\alpha(\eta)$ by \eqref{eq:acceptance_feedback}. Repeated plays of the same $\eta$ therefore allow the DC to estimate $\alpha(\eta)$. The function $F_{\eta}(.)$ then converts this estimate of $\alpha(\eta)$ into an estimate of the induced utility $\mathsf{U}(\eta)$.

\section{Main Results}
\label{sec:red-main-result}

In this section, we present the main results. Let 
\begin{align}
    \Gamma(\eta)
    \triangleq
    \mathsf{U}^*_{\Lambda_{\mathsf{DC}}}
    -
    \mathsf{U}(\eta),
    \label{eq:red-gap-definition}
\end{align}
for all $\eta\in\Lambda_{\mathsf{DC}}$. We impose the following assumptions, which are discussed in Appendix~\ref{app:assumptions-and-zooming}.

\noindent\textbf{A1.}
For all $\eta\in\Lambda_{\mathsf{DC}}$, we assume that $0\leq \mathsf{U}(\eta)\leq 1$.

\noindent\textbf{A2.}
We assume that there exists a metric $\mathcal{D}$ such that $\left|\mathsf{U}(\eta)-\mathsf{U}(\eta')\right|\leq \mathcal{D}(\eta,\eta')$ for all $\eta,\eta'\in\Lambda_{\mathsf{DC}}$. The diameter of a set $\mathcal{S}$ under $\mathcal{D}$ is defined as $\operatorname{diam}\left(\mathcal{S},\mathcal{D}\right)\triangleq \sup_{\eta,\eta'\in\mathcal{S}}\mathcal{D}(\eta,\eta')$. We assume that $\operatorname{diam}\left(\Lambda_{\mathsf{DC}},\mathcal{D}\right)\leq 1$.

\noindent\textbf{A3.}
For all $\eta\in\Lambda_{\mathsf{DC}}$ and all $\alpha,\alpha'\in[0,1]$, we assume that $\left|F_{\eta}(\alpha)-F_{\eta}(\alpha')\right|\leq \ell |\alpha-\alpha'|$. We define $\bar{\ell}\triangleq \max\{1,\ell\}$.

\noindent\textbf{A4.}
 For every
$r\in(0,1]$, define the gap annulus
\begin{align}
    \Lambda_r
    \triangleq
  \{
        \eta\in\Lambda_{\mathsf{DC}}
        :
        \frac{r}{2}
        <
        \Gamma(\eta)
        \leq
        r
   \}.
    \label{eq:red-ksu-annulus}
\end{align}
We assume that the game-of-coding  instance $\mathcal{I}_{\mathsf{GoC}}
    \triangleq
    \left(
        \Lambda_{\mathsf{DC}},
        \mathcal{D},
        \mathsf{U}
    \right)$ has
$(C_z,d_z)$-zooming dimension, meaning that for every
$r\in(0,1]$, the set $\Lambda_r$ can be covered by at most $C_z r^{-d_z}$
sets of $\mathcal{D}$-diameter at most $\frac{r}{8}$.

\begin{algorithm}[t]
\caption{Game-of-Coding Zooming Algorithm}
\label{alg:red-goc-zooming}
\begin{algorithmic}[1]
\Require Metric $\mathcal{D}$, Lipschitz constant $\ell$, maps $\{F_{\eta}(.)\}_{\eta\in\Lambda_{\mathsf{DC}}}$
\Ensure Commitments $\eta_1,\eta_2,\ldots$
\State $\bar{\ell}\gets\max\{1,\ell\}$, $t\gets0$
\For{phases $i=1,2,\ldots$}
    \State $\mathcal{S}_i\gets\emptyset$
    \For{local rounds $s=1,\ldots,2^i$}
        \State $t\gets t+1$
        \For{each $v\in\mathcal{S}_i$}
            \State $\widehat{\alpha}_i(v)\gets N_i^{\mathsf{acc}}(v)/N_i(v)$ if $N_i(v)>0$, and $0$ otherwise
            \State $\widehat{\mathsf{U}}_i(v)\gets \Pi_{[0,1]}(F_v(\widehat{\alpha}_i(v)))$
            \State $\rho_i(v)\gets \bar{\ell}\sqrt{8i/(2+N_i(v))}$
        \EndFor
        \If{$\Lambda_{\mathsf{DC}}\nsubseteq \displaystyle\bigcup_{v\in\mathcal{S}_i}\mathcal{B}_{\mathcal{D}}(v,\rho_i(v))$}
            \State Add an uncovered $\eta_n$ to $\mathcal{S}_i$ with $N_i(\eta_n)=N_i^{\mathsf{acc}}(\eta_n)=0$
        \EndIf
        \State Choose $v_i\in\arg\max_{v\in\mathcal{S}_i}\left\{\widehat{\mathsf{U}}_i(v)+2\rho_i(v)\right\}$
        \State Commit to $\eta_t=v_i$ and observe $Z_t=\mathbf{1}\{\mathcal{A}_{\eta_t}\}$
        \State $N_i(v_i)\gets N_i(v_i)+1$, \quad $N_i^{\mathsf{acc}}(v_i)\gets N_i^{\mathsf{acc}}(v_i)+Z_t$
    \EndFor
\EndFor
\end{algorithmic}
\end{algorithm}

For $x\in\mathbb{R}$, let  $\Pi_{[0,1]}(x)\triangleq \min\{1,\max\{0,x\}\}$. Also, for $v\in\Lambda_{\mathsf{DC}}$ and $r>0$, let  $\mathcal{B}_{\mathcal{D}}(v,r)\triangleq \{\eta\in\Lambda_{\mathsf{DC}}:\mathcal{D}(\eta,v)\leq r\}$.

We now present Algorithm~\ref{alg:red-goc-zooming}, which is based on the phase-based zooming algorithm of~\cite{kleinberg2008multi}. The algorithm runs in phases, where phase $i$ has length $2^i$ rounds. At the beginning of each phase, the active set is reset. We denote by $\mathcal{S}_i$ the set of active thresholds in phase $i$. For each active threshold $v\in\mathcal{S}_i$, let $N_i(v)$ be the number of times $v$ has been played so far in phase $i$, and let $N_i^{\mathsf{acc}}(v)$ be the number of accepted plays. Thus, if $v$ has been played in rounds $t_1,\ldots,t_{N_i(v)}$ of phase $i$, then $N_i^{\mathsf{acc}}(v)=\sum_{j=1}^{N_i(v)} Z_{t_j}$. When $N_i(v)>0$, the empirical acceptance probability is $\widehat{\alpha}_i(v)=N_i^{\mathsf{acc}}(v)/N_i(v)$, and the  utility estimate is $\widehat{\mathsf{U}}_i(v)=\Pi_{[0,1]}\left(F_v(\widehat{\alpha}_i(v))\right)$.

The confidence radius assigned to $v$ in phase $i$ is $\rho_i(v)=\bar{\ell}\sqrt{8i/(2+N_i(v))}$. This radius captures the statistical uncertainty in $\widehat{\mathsf{U}}_i(v)$ and also determines the region of the decision space represented by $v$, namely $\mathcal{B}_{\mathcal{D}}(v,\rho_i(v))$. The optimistic index of $v$ is $I_i(v)=\widehat{\mathsf{U}}_i(v)+2\rho_i(v)$, where the first term estimates the utility and the second term provides an optimism bonus. The algorithm plays active thresholds with large optimistic index and activates a new threshold only when the current active balls do not cover the entire decision space.

The intuition behind Algorithm~\ref{alg:red-goc-zooming} is simple. As an active threshold $v$ is sampled more often, $N_i(v)$ increases and $\rho_i(v)$ decreases, giving a more accurate estimate of its utility. If some region of $\Lambda_{\mathsf{DC}}$ is not covered by the current active balls, the algorithm activates a new threshold in that region. The algorithm plays the active threshold with the largest optimistic index $I_i(v)$. In this way, the algorithm adaptively concentrates its search around thresholds that may be near-optimal.

\begin{theorem}
\label{thm:main-zooming}
\label{thm:red-main-zooming}
Under the  Assumptions A1--A4, let
$\pi_{\mathsf{Z}}$ be the policy induced by
Algorithm~\ref{alg:red-goc-zooming}. Then, for every $T\geq2$,
\begin{align}
    \mathsf{Reg}_T(\pi_{\mathsf{Z}})
    =
    O
    \left(
        \left(
            C_z\bar{\ell}^{2}\ln^{2}T
        \right)^{\frac{1}{d_z+2}}
        T^{\frac{d_z+1}{d_z+2}}
    \right).
    \label{eq:red-main-regret-rate}
\end{align}
\end{theorem}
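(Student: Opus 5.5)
The plan is to follow the standard zooming analysis of~\cite{kleinberg2008multi}, run phase by phase, with one new ingredient: a concentration step that transfers Hoeffding bounds on the empirical acceptance rate to the utility estimate through the map $F_v$. Throughout, fix a phase $i$ of length $2^i$.

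\emph{Clean event.} Define the clean event $\mathcal{E}_i$: for every active $v\in\mathcal{S}_i$ and every round of the phase, $|\widehat{\alpha}_i(v)-\alpha(v)|\leq \sqrt{8i/(2+N_i(v))}$. Since $\mathbb{E}[Z_t\mid\mathcal{H}_{t-1},\eta_t]=\alpha(\eta_t)$ by \eqref{eq:acceptance_feedback}, the accept indicators for a fixed $v$ form a bounded martingale difference sequence. Azuma--Hoeffding, combined with a union bound over at most $2^i$ active points, at most $2^i$ values of $N_i(v)$, and the $2^i$ rounds, gives $\Pr(\mathcal{E}_i^c)\leq c\,2^{-i}$ or better. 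The constant $8i$ is chosen exactly to absorb the $2^{3i}$ union factor.

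The randomness of which point becomes active is handled in the usual way. Imagine a tape of i.i.d.\ samples attached to the $j$-th activated arm, so that the bound applies to the $j$-th activation index rather than to a random point. On $\mathcal{E}_i$, A3 and the fact that $\mathsf{U}(v)=F_v(\alpha(v))\in[0,1]$ (A1) give $|\widehat{\mathsf{U}}_i(v)-\mathsf{U}(v)|\leq \ell\sqrt{8i/(2+N_i(v))}\leq\rho_i(v)$. Projection onto $[0,1]$ only decreases this error. When $N_i(v)=0$ the same bound holds trivially, because $\rho_i(v)\geq \bar\ell\sqrt{4i}\geq 1$.

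\emph{Key zooming lemma.} On $\mathcal{E}_i$, whenever $v$ is played, $\Gamma(v)\leq 3\rho_i(v)$ (up to a constant). To see this, let $\eta^\ast$ be a near-maximizer of $\mathsf{U}$, with an $\epsilon$-approximation argument since only a supremum is assumed. The covering step guarantees some active $w$ with $\mathcal{D}(\eta^\ast,w)\leq\rho_i(w)$. By A2,
\[
I_i(w)\geq \mathsf{U}(w)+\rho_i(w)\geq \mathsf{U}(\eta^\ast).
\]
Since $v$ maximizes the index,
\[
\mathsf{U}(\eta^\ast)\leq I_i(v)\leq \mathsf{U}(v)+3\rho_i(v).
\]
This gives two consequences. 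First, $N_i(v)\lesssim \bar\ell^2 i/\Gamma(v)^2$. Second, any two active points $u,v$ with $u$ activated later satisfy $\mathcal{D}(u,v)>\rho_i(v)\geq\Gamma(v)/3$. The second holds because $u$ was uncovered by $v$'s ball when it was added, and $\rho_i(v)$ only shrinks, so the separation persists.

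\emph{Counting and summing.} Within each annulus $\Lambda_r$, the active points are $\Omega(r)$-separated. This is the step I expect to be the main obstacle: matching the separation constant to the $r/8$ diameter in A4, so that each covering set contains at most one active point. It is the reason the paper uses diameter $r/8$, and I would first try to show the separation exceeds $r/8$ directly from $\rho_i(v)\geq\Gamma(v)/3>r/6$, adjusting constants if needed. Granting this, $|\mathcal{S}_i\cap\Lambda_r|\leq C_z r^{-d_z}$. Hence the regret from points with $\Gamma>\delta$ in phase $i$ is
\[
\sum_{r=2^{-k}>\delta}C_z r^{-d_z}\cdot O\!\left(\frac{\bar\ell^2 i}{r^2}\right)r = O\!\left(C_z\bar\ell^2 i\,\delta^{-(d_z+1)}\right).
\]
The remaining points contribute at most $\delta\, 2^i$. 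Choosing $\delta=(C_z\bar\ell^2 i\,2^{-i})^{1/(d_z+2)}$ balances the two terms and gives phase regret $O\big((C_z\bar\ell^2 i)^{1/(d_z+2)}2^{i(d_z+1)/(d_z+2)}\big)$. On the bad event, the phase regret is at most $2^i\Pr(\mathcal{E}_i^c)=O(1)$ by A1.

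Finally, sum over the phases $i\leq\lceil\log_2 T\rceil$. The sum is geometric and dominated by the last phase, which gives \eqref{eq:red-main-regret-rate}. The $\ln^2T$ factor in the theorem is looser than the single $i=O(\log T)$ obtained here, so it is comfortably absorbed. The same slack also covers a cruder union bound in the clean-event step, if one is needed.
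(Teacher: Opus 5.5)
Your argument is correct, but it takes a different route from the paper. The paper proves only two things: the clean-phase bound (Lemma~\ref{lem:proof-clean-phase}, which matches your Hoeffding-plus-union-bound step), and the smoothness and $(2,128\bar{\ell}^2)$-goodness of $\rho_i$ (Lemma~\ref{lem:proof-radius-conditions}). It then invokes \cite[Theorem~4.2]{kleinberg2008multi} as a black box. You instead re-derive the zooming analysis inline: the index lemma $\Gamma(v)\le 3\rho_i(v)$, separation of active points, per-annulus counting against A4, the $\delta$-balancing within a phase, and the geometric sum over phases.

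The paper's route is shorter and inherits the cited theorem's $\ln^2(eT)$. Yours is self-contained and gives the sharper $(\ln T)^{1/(d_z+2)}$ factor, but you must close the constant-matching step yourself. As written, that step is slightly imprecise. $\Gamma(v)\le 3\rho_i(v)$ holds only at rounds in which $v$ is played. The separation $\mathcal{D}(u,v)>\rho_i(v)$, however, refers to $v$'s radius at $u$'s activation time, which may be $v$'s radius just after its last play. One play shrinks $\rho_i$ by a factor of at least $\sqrt{2/3}$, so you still get $\mathcal{D}(u,v)>\sqrt{2/3}\,\Gamma(v)/3>\Gamma(v)/4>r/8$ for $v\in\Lambda_r$. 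If $v$ was never played, then $\rho_i(v)\ge 1$. Either way, each set of diameter $r/8$ contains at most one active point. This is exactly what the paper's smoothness condition encodes, together with the factor $4=3/(3/4)$ in its goodness condition.

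You should also say why one activation per round suffices for your covering claim. A fresh threshold has $\rho_i=2\bar{\ell}\sqrt{i}\ge 1\ge\operatorname{diam}(\Lambda_{\mathsf{DC}},\mathcal{D})$, so its ball covers all of $\Lambda_{\mathsf{DC}}$.
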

Theorem~\ref{thm:red-main-zooming} implies that since $(d_z+1)/(d_z+2)<1$,
we have $\mathsf{Reg}_T(\pi_{\mathsf{Z}})/{T}\to
    0$, as $T\to\infty$.
Thus, although the DC observes only binary acceptance feedback, its
average loss relative to the best fixed threshold in hindsight vanishes.
Moreover, 
The regret exponent in Theorem~\ref{thm:red-main-zooming} is the same
zooming-type exponent obtained for metric-space bandits with zooming
dimension $d_z$~\cite{kleinberg2008multi}.

\section{Proof of Theorem~\ref{thm:red-main-zooming}}
\label{sec:red-proof-main-result}

In this section, we prove Theorem~\ref{thm:red-main-zooming}. Note that the DC does not observe samples with mean $\mathsf{U}(\eta)$. Instead, by \eqref{eq:acceptance_feedback}, it observes binary acceptance indicators with mean $\alpha(\eta)$. The next lemma shows that these observations still provide valid confidence intervals for the  utility $\mathsf{U}(\eta)=F_{\eta}(\alpha(\eta))$.

\begin{lemma}
\label{lem:proof-clean-phase}
For every phase $i$, every active threshold $v\in\mathcal{S}_i$ and every time in phase $i$, with probability at least $1-8^{-i}$, we have $|
        \widehat{\mathsf{U}}_i(v)
        -
        \mathsf{U}(v)
    |
    \leq
    \rho_i(v)$.
\end{lemma}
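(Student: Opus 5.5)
We split into the trivial case $N_i(v)=0$ and a concentration case $N_i(v)\geq 1$, and reduce everything to a bound on $|\widehat{\alpha}_i(v)-\alpha(v)|$ via the identity $\mathsf{U}(v)=F_v(\alpha(v))$ from Section~\ref{sec:preliminaries}.

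\textbf{Case $N_i(v)=0$.} Here $\rho_i(v)=\bar{\ell}\sqrt{4i}\geq 2$, since $\bar{\ell}\geq 1$ and $i\geq 1$. Both $\widehat{\mathsf{U}}_i(v)$ (by the projection $\Pi_{[0,1]}$) and $\mathsf{U}(v)$ (by A1) lie in $[0,1]$. Hence their difference is at most $1\leq\rho_i(v)$ deterministically.

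\textbf{Reduction to acceptance probabilities.} Projection onto $[0,1]$ is $1$-Lipschitz and $\mathsf{U}(v)\in[0,1]$, so A3 gives
\[
|\widehat{\mathsf{U}}_i(v)-\mathsf{U}(v)|\leq |F_v(\widehat{\alpha}_i(v))-F_v(\alpha(v))|\leq \ell\,|\widehat{\alpha}_i(v)-\alpha(v)|\leq \bar{\ell}\,|\widehat{\alpha}_i(v)-\alpha(v)|.
\]
It therefore suffices to show $|\widehat{\alpha}_i(v)-\alpha(v)|\leq\sqrt{8i/(2+N_i(v))}$ with probability at least $1-8^{-i}$.

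\textbf{Concentration.} By \eqref{eq:acceptance_feedback}, the sequence $Z_{t_j}-\alpha(v)$ over the plays of $v$ is a martingale difference sequence bounded in $[-1,1]$. The main obstacle is that $N_i(v)$ is random and adaptively chosen, so Hoeffding's inequality cannot be applied directly.

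To handle this, I would use the standard skeleton device. Fix $v$ as the $k$-th activated arm and imagine its reward tape $Z^{(v)}_1,Z^{(v)}_2,\ldots$, where the $j$-th play of $v$ reveals the $j$-th entry. Conditionally on the history, each entry has mean $\alpha(v)$ (myopic adversary). For each fixed $n\in\{1,\ldots,2^i\}$, Azuma--Hoeffding gives
\[
\Pr\left(\left|\tfrac1n\sum_{j\le n}Z^{(v)}_j-\alpha(v)\right|>\sqrt{8i/(2+n)}\right)\leq 2\exp\left(-\tfrac{2n\cdot 8i}{2+n}\right)\leq 2e^{-16i/3},
\]
where the last step uses $n\geq 1$, so $2n/(2+n)\geq 2/3$.

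A union bound over $n\leq 2^i$ (and over the at most $2^i$ arms activated in the phase, if the statement is read uniformly) costs a factor of at most $4^i$. The total failure probability is then at most $2\cdot 4^{i}e^{-16i/3}$. Since $e^{16/3}\approx 207>32\cdot 4$, this is below $8^{-i}$ for all $i\geq 1$; the constant $8i$ in $\rho_i$ appears to have been chosen exactly so that this inequality holds. Evaluating at the realized $n=N_i(v)$ then gives the claim at every time in the phase.
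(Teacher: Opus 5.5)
Your proposal is correct and follows essentially the same route as the paper. Both use the deterministic bound when $N_i(v)=0$, the reduction through the non-expansive projection and A3, and Hoeffding applied to the first $s$ acceptance indicators of each activated threshold, union-bounded over at most $2^i$ thresholds and $2^i$ sample sizes to reach $2\cdot 4^i e^{-16i/3}\le 8^{-i}$; your reward-tape/Azuma framing only makes explicit the adaptivity that the paper passes over.

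One wording fix: your exponent $2n\cdot 8i/(2+n)$ requires each increment $Z-\alpha(v)$ to lie in an interval of length one, which holds since $Z\in\{0,1\}$. For increments only known to be bounded in $[-1,1]$, as you describe them, Azuma gives a quarter of that exponent, which no longer beats $32^i$.
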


\begin{proof}
Fix a phase $i$, and let $V_{i,1},\ldots,V_{i,J_i}$ be the thresholds activated during this phase, listed in their activation order. Since phase $i$ has length $2^i$ and at most one new threshold can be activated in each round, we have $J_i\leq 2^i$.

For an activated threshold $V_{i,j}$, let $Z_{i,j,1},\ldots,Z_{i,j,s}$ denote the first $s$ acceptance indicators observed from plays of $V_{i,j}$ in phase $i$, and define $\widehat{\alpha}_{i,j,s}\triangleq s^{-1}\sum_{\tau=1}^{s}Z_{i,j,\tau}$. By \eqref{eq:acceptance_feedback}, each acceptance indicator observed after committing to $V_{i,j}$ has conditional mean $\alpha(V_{i,j})$. Therefore, Hoeffding's inequality~\cite{hoeffding1994probability} gives, for every fixed $j$ and $s\geq1$,
\begin{align*}
    \Pr(
        |
            \widehat{\alpha}_{i,j,s}
            -
            \alpha(V_{i,j})
        |
        >
        \sqrt{8i/(2+s)}
    )
    \leq
    2\exp(
        -\frac{16is}{2+s}
    ).
\end{align*}
Since $s/(2+s)\geq 1/3$ for $s\geq1$, the right-hand side is at most $2e^{-16i/3}$. There are at most $2^i$ activated thresholds and at most $2^i$ possible sample sizes for each threshold. Hence, by the union bound, the probability that any empirical acceptance estimate in phase $i$ violates the above bound is at most $2^i\cdot 2^i\cdot 2e^{-16i/3}\leq 8^{-i}$. Thus, with probability at least $1-8^{-i}$, for all activated thresholds $V_{i,j}$ and all sample sizes $s\leq 2^i$,
\begin{align}
    \left|
        \widehat{\alpha}_{i,j,s}
        -
        \alpha(V_{i,j})
    \right|
    \leq
    \sqrt{8i/(2+s)}.
    \label{eq:proof-clean-alpha}
\end{align}

Now suppose this event holds, and let $v\in\mathcal{S}_i$ be active. If $N_i(v)=s>0$, then $v=V_{i,j}$ for some $j$ and $\widehat{\alpha}_i(v)=\widehat{\alpha}_{i,j,s}$. By definition, $\widehat{\mathsf{U}}_i(v)=\Pi_{[0,1]}(F_v(\widehat{\alpha}_i(v)))$. Since $\mathsf{U}(v)\in[0,1]$ by Assumption A1 and projection onto $[0,1]$ is non-expansive,
\begin{align}
    |
        \widehat{\mathsf{U}}_i(v)
        -
        \mathsf{U}(v)
    |
    \leq
    \left|
        F_v(\widehat{\alpha}_i(v))
        -
        F_v(\alpha(v))
    \right|.
\end{align}
Using Assumption A3 and \eqref{eq:proof-clean-alpha}, we obtain $|
        \widehat{\mathsf{U}}_i(v)
        -
        \mathsf{U}(v)
    |
    \leq
    \ell
    \left|
        \widehat{\alpha}_i(v)
        -
        \alpha(v)
    \right|
    \leq
    \bar{\ell}
    \sqrt{
        8i/(2+N_i(v))
    }
    =
    \rho_i(v)$.

If $N_i(v)=0$, then Algorithm~\ref{alg:red-goc-zooming} sets $\widehat{\mathsf{U}}_i(v)=0$. By Assumption A1, $|\widehat{\mathsf{U}}_i(v)-\mathsf{U}(v)|\leq 1$. Also, $\rho_i(v)=\bar{\ell}\sqrt{8i/2}=2\bar{\ell}\sqrt{i}\geq1$, since $i\geq1$ and $\bar{\ell}\geq1$. Hence Lemma \ref{lem:proof-clean-phase} also holds when $N_i(v)=0$. This proves the lemma.
\end{proof}

To complete the proof, we invoke \cite[Theorem~4.2]{kleinberg2008multi}. The theorem applies to the phase-based zooming algorithm of \cite[Algorithm~2.3]{kleinberg2008multi}, which maintains a finite active set of arms in each phase. At each round, the algorithm checks whether the confidence balls around the active arms cover the whole metric space. If not, it activates an uncovered arm. It then assigns each active arm $v$ the optimistic index $\widehat{\mu}_t(v)+2\widehat r_t(v)$ and plays an active arm with the largest index, where $\widehat{\mu}_t(v)$ is an estimator of the payoff and $\widehat r_t(v)$ is its confidence radius.

In the notation of \cite[Theorem~4.2]{kleinberg2008multi}, the bandit instance is $(L,X,\mu)$, where $X$ is the arm set, $L$ is the metric, and $\mu$ is the payoff function. The theorem assumes that $\mu:X\to[0,1]$, $\operatorname{diam}(X,L)\leq1$, and $|\mu(x)-\mu(y)|\leq L(x,y)$ for all $x,y\in X$. It also defines $\mu^*\triangleq \sup_{x\in X}\mu(x)$ and $\Delta(x)\triangleq \mu^*-\mu(x)$.

The theorem requires the instance to have zooming dimension $d$ with constant $c$, meaning that, for every $r\in(0,1]$, the set $X_r\triangleq \{x\in X:r/2<\Delta(x)\leq r\}$ can be covered by at most $cr^{-d}$ sets of $L$-diameter at most $r/8$. It also requires clean phases, namely $|\widehat{\mu}_t(v)-\mu(v)|\leq \widehat r_t(v)$ for every active arm $v$ and every time $t$ in phase $i$, with probability at least $1-8^{-i}$. Finally, the radius must be smooth, meaning $(3/4)\widehat r_t(v)\leq \widehat r_{t+1}(v)\leq \widehat r_t(v)$, and $(q,c_0)$-good, meaning that $\Delta(v)\leq 4\widehat r_t(v)$ implies $n_t(v)\leq c_0 i\,\Delta(v)^{-q}$, where $n_t(v)$ is the number of times $v$ has been played earlier in  phase $i$.

Under these conditions, if $d+q>1$, \cite[Theorem~4.2]{kleinberg2008multi} implies that the regret of the  zooming policy satisfies
\begin{align}
    \mathsf{Reg}_T(\pi)
    \leq
    C_0
    \left(
        c c_0 \ln^2(eT)
    \right)^{\frac{1}{d+q}}
    T^{1-\frac{1}{d+q}},
    \label{eq:proof-ext-bound}
\end{align}
for a universal constant $C_0>0$.

We now verify these conditions for Algorithm~\ref{alg:red-goc-zooming}. We identify $X=\Lambda_{\mathsf{DC}}$, $L=\mathcal{D}$, and $\mu=\mathsf{U}$. By Assumption A1, $\mu(\eta)=\mathsf{U}(\eta)\in[0,1]$ for all $\eta\in\Lambda_{\mathsf{DC}}$. By Assumption A2, $\operatorname{diam}(\Lambda_{\mathsf{DC}},\mathcal{D})\leq1$ and $|\mu(\eta)-\mu(\eta')|=|\mathsf{U}(\eta)-\mathsf{U}(\eta')|\leq\mathcal{D}(\eta,\eta')$ for all $\eta,\eta'\in\Lambda_{\mathsf{DC}}$. Thus, the boundedness, diameter, and Lipschitz requirements are satisfied.
The optimal value in the induced bandit instance is $\mu^*=\sup_{\eta\in\Lambda_{\mathsf{DC}}}\mathsf{U}(\eta)=\mathsf{U}^*_{\Lambda_{\mathsf{DC}}}$. Hence the bandit gap is $\Delta(\eta)=\mu^*-\mu(\eta)=\Gamma(\eta)$, so the set $X_r$ is exactly the set $\Lambda_r$ in \eqref{eq:red-ksu-annulus}. By Assumption A4, the zooming-dimension condition holds with $c=C_z$ and $d=d_z$.

It remains to verify the algorithmic conditions. Under the above identification, Algorithm~\ref{alg:red-goc-zooming} has the same phase structure, activation rule, covering rule, and optimistic-index rule as the phase-based zooming algorithm in \cite[Algorithm~2.3]{kleinberg2008multi}. Its estimator and radius are $\widehat{\mu}_t(v)=\widehat{\mathsf{U}}_i(v)$ and $\widehat r_t(v)=\rho_i(v)$, where $i$ is the current phase. Lemma~\ref{lem:proof-clean-phase} proves the clean-phase condition. It remains only to prove the smoothness and goodness conditions, where we prove in the following lemma.

\begin{lemma}
\label{lem:proof-radius-conditions}
for all $i$, the radius $\rho_i(v)$ satisfies the smoothness condition. Moreover, it is $(2,128\bar{\ell}^2)$-good.
\end{lemma}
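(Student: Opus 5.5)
Both conditions can be checked directly from the explicit formula $\rho_i(v)=\bar{\ell}\sqrt{8i/(2+N_i(v))}$. Within phase $i$, the only quantity that changes from one round to the next is $N_i(v)$, and it either stays the same or increases by one.

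\emph{Smoothness.} Fix $v$ and a round $t$ in phase $i$, and write $n=N_i(v)$ at time $t$. At time $t+1$ the count is $n$ or $n+1$. The ratio is therefore
\begin{align*}
\frac{\rho_i^{(t+1)}(v)}{\rho_i^{(t)}(v)}\in\left\{1,\ \sqrt{\frac{2+n}{3+n}}\right\}.
\end{align*}
Since $n\ge 0$, the second value is at least $\sqrt{2/3}\approx 0.816\ge 3/4$. Both values are also at most $1$. Hence $(3/4)\rho_i^{(t)}(v)\le\rho_i^{(t+1)}(v)\le\rho_i^{(t)}(v)$.

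\emph{Goodness.} Suppose $\Gamma(v)\le 4\rho_i(v)$, and set $n=N_i(v)$. Then $\Gamma(v)^2\le 16\bar{\ell}^2\cdot 8i/(2+n)$, so
\begin{align*}
2+n\le 128\,\bar{\ell}^2\, i\,\Gamma(v)^{-2},
\end{align*}
and in particular $n\le 128\bar{\ell}^2 i\,\Gamma(v)^{-2}$. This is exactly the $(q,c_0)$-good condition with $q=2$ and $c_0=128\bar{\ell}^2$. When $\Gamma(v)=0$ the right-hand side is $+\infty$, so the condition holds trivially.

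The one point needing care is the convention for ``time $t$''. The count $n_t(v)$ in the goodness condition is the number of plays earlier in the phase, which matches $N_i(v)$ as the algorithm computes it before the current play. So no off-by-one adjustment is needed, and the rest is direct algebra.
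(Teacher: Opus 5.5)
Your proposal is correct and matches the paper's proof essentially verbatim: smoothness follows from the one-step ratio $\sqrt{(2+n)/(3+n)}\geq\sqrt{2/3}>3/4$, and goodness follows by squaring $\Gamma(v)\leq 4\bar{\ell}\sqrt{8i/(2+N_i(v))}$ and dropping the $+2$. Your handling of the $\Gamma(v)=0$ case and your note that $N_i(v)$ counts plays before the current round are small clarifications that the paper leaves implicit.
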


\begin{proof}
Let $n=N_i(v)$ be the number of times $v$ has been played so far in phase $i$. If $v$ is not played in the next round, then the radius does not change. If $v$ is played, then $n$ becomes $n+1$, and based on $\rho_i(v) =\bar{\ell}
    \sqrt{
        8i/(2+N_i(v))}$, we have
\begin{align}
    \frac{\rho_i(n+1)}{\rho_i(n)}
    =
    \sqrt{
        \frac{2+n}{3+n}
    }
    \geq
    \sqrt{\frac{2}{3}}
    >
    \frac{3}{4}.
\end{align}
Therefore, $(3/4)\rho_i(v)\leq \rho_i^+(v)\leq \rho_i(v)$, where $\rho_i^+(v)$ is the radius after the next round. This proves smoothness.

We now prove goodness. Suppose $\Gamma(v)\leq 4\rho_i(v)$. This gives $\Gamma(v)\leq 4\bar{\ell}\sqrt{8i/(2+N_i(v))}$. Squaring both sides and rearranging yields $2+N_i(v)\leq 128\bar{\ell}^2 i\,\Gamma(v)^{-2}$, and therefore $N_i(v)\leq 128\bar{\ell}^2 i\,\Gamma(v)^{-2}$. Since $\Gamma(v)=\Delta(v)$, this is the goodness condition with $q=2$ and $c_0=128\bar{\ell}^2$.
\end{proof}

We now apply \cite[Theorem~4.2]{kleinberg2008multi} with $c=C_z$, $d=d_z$, $q=2$, and $c_0=128\bar{\ell}^2$. Since $d_z+2>1$, \eqref{eq:proof-ext-bound} gives
\begin{align}
    \mathsf{Reg}_T(\pi_{\mathsf Z})
    \leq
    C_0
    \left(
        128C_z\bar{\ell}^2
        \ln^2(eT)
    \right)^{\frac{1}{d_z+2}}
    T^{\frac{d_z+1}{d_z+2}}.
    \label{eq:proof-applied-bound}
\end{align}
This completes the proof of Theorem~\ref{thm:red-main-zooming}.

\section{Numerical Experiments}
\label{sec:experiments}

In this section, we compare Algorithm~\ref{alg:red-goc-zooming} with the explore-then-commit baseline of~\cite{nodehi2025unknown}. In the unknown-adversary formulation of~\cite{nodehi2025unknown}, the DC first explores for a fixed number of rounds and is evaluated only through the final selected threshold $\hat{\eta}$, with guarantee $\Pr\left(\mathsf{U}^*_{\Lambda_{\mathsf{DC}}}-\mathsf{U}(\hat{\eta})>\lambda\right)<\delta$. In contrast, our formulation evaluates the entire learning trajectory through the cumulative regret in \eqref{def:regret}. 

We consider the two-node setting from Section~\ref{sec:problem-formulation}. The honest noise is $\mathbf{n}_h\sim\mathrm{Unif}[-\Delta,\Delta]$ with $\Delta=1$, and the threshold space is $\Lambda_{\mathsf{DC}}=[2,30]$. For each fixed $\eta$, the adversary chooses the induced acceptance probability as $\alpha(\eta)\in \arg\max_{0<\alpha\leq1}\mathsf{Q}_{\mathsf{AD}}\left(c_{\eta}(\alpha),\alpha\right)$. The corresponding DC utility is $\mathsf{U}(\eta)=\mathsf{Q}_{\mathsf{DC}}\left(c_{\eta}(\alpha(\eta)),\alpha(\eta)\right)$.

In the experiment, we use $\mathsf{Q}_{\mathsf{AD}}(\mathsf{MMSE},\mathsf{PA})=\ln(\mathsf{MMSE})+0.2\ln(\mathsf{PA})$ and $\mathsf{Q}_{\mathsf{DC}}(\mathsf{MMSE},\mathsf{PA})=-\mathsf{MMSE}+200\mathsf{PA}$. Using the procedure described in \cite[Algorithm~1]{GoCJournal}, the best threshold over $[2,30]$ is $\eta^*=12.7189$, with $\mathsf{U}^*_{2,30}=44.7935$. Fig.~\ref{fig:exp-true-utility} shows the induced DC utility curve, where the dashed vertical line marks $\eta^*$.

\begin{figure}[t]
    \centering
    \includegraphics[width=0.75\linewidth]{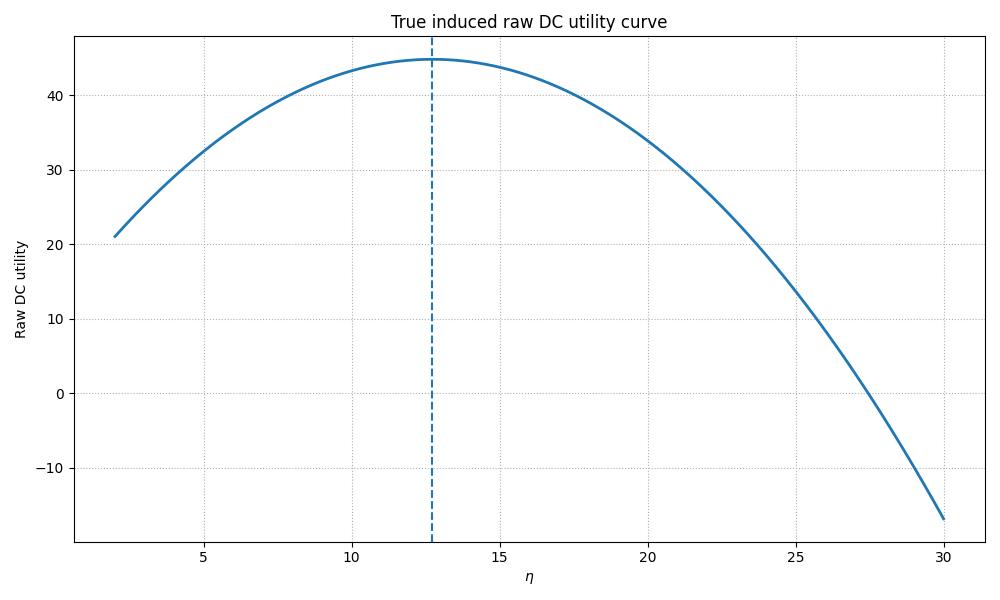}
    \caption{Induced DC utility as a function of the acceptance threshold $\eta$. The dashed vertical line marks the optimal threshold $\eta^*=12.7189$.}
    \label{fig:exp-true-utility}
\end{figure}

For the explore-then-commit baseline, we set $\lambda=0.5$ and $\delta=0.05$. Following~\cite{nodehi2025unknown}, this gives $n=13$ and $k=3280$. Hence, the baseline explores $n+1=14$ grid points and spends $(n+1)k=45920$ rounds in the exploration phase.

We run both algorithms for $T=100000$ rounds. In each round, we sample $\mathbf{u}\sim\mathrm{Unif}[-M,M]$ and $\mathbf{n}_a$ is sampled  following the procedure described in \cite[Algorithm~2]{GoCJournal}. 

The baseline spends $45920$ rounds exploring and then commits for the remaining $54080$ rounds. It selects $\hat{\eta}_{\mathrm{old}}=10.6154$, achieving final cumulative  regret $14058.9731$. Algorithm~\ref{alg:red-goc-zooming} activates only $11$ thresholds over the trajectory.  It achieves final cumulative  regret $8852.6987$. Fig.~\ref{fig:exp-regret} compares the cumulative  regret of the two algorithms. 

\begin{figure}[t]
    \centering
    \includegraphics[width=0.75\linewidth]{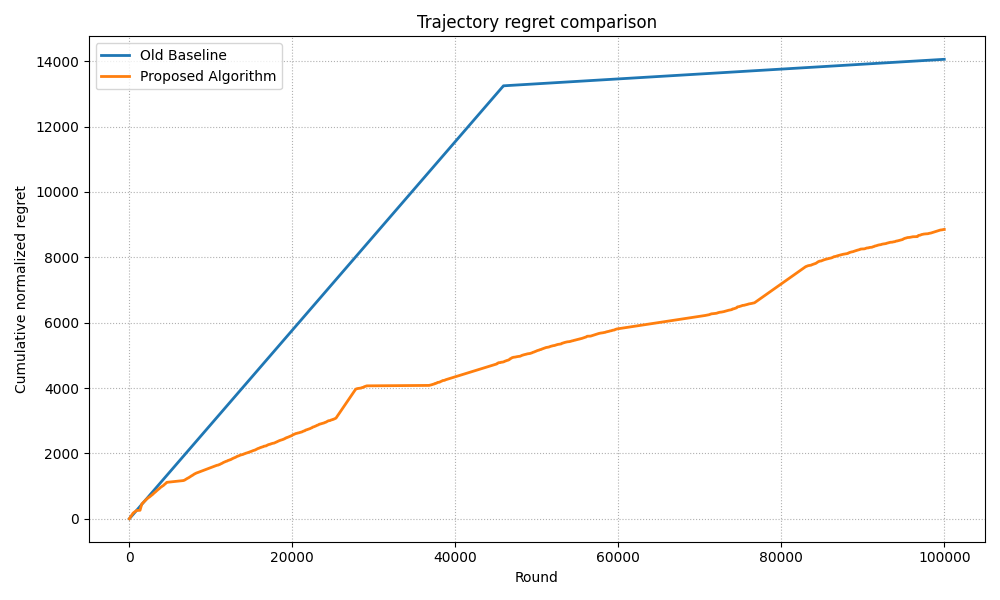}
    \caption{Cumulative  regret over $T=100000$ rounds. The proposed  algorithm achieves lower trajectory regret than the explore-then-commit baseline.}
    \label{fig:exp-regret}
\end{figure}

\section{Acknowledgment}
This work has been partially supported by the National Science Foundation under Grant CCF-2348638.

\appendices

\section{Discussion of Assumptions and Zooming Dimension}
\label{app:assumptions-and-zooming}

This appendix discusses the assumptions used in
Theorem~\ref{thm:red-main-zooming}. 

\subsection{Utility Normalization}
\label{subsec:app-utility-normalization}

Theorem~\ref{thm:red-main-zooming} assumes
\begin{align}
    0\leq \mathsf{U}(\eta)\leq 1,
    \qquad
    \forall \eta\in\Lambda_{\mathsf{DC}}.
    \label{eq:app-normalized-U}
\end{align}
This is a standard normalization. Suppose instead that the induced
utility is bounded as
\begin{align}
    U_{\min}
    \leq
    \mathsf{U}(\eta)
    \leq
    U_{\max},
    \qquad
    \forall \eta\in\Lambda_{\mathsf{DC}},
    \label{eq:app-utility-range}
\end{align}
where
\begin{align}
    R_U
    \triangleq
    U_{\max}-U_{\min}.
    \label{eq:app-utility-range-width}
\end{align}
If $R_U=0$, then all thresholds have the same utility and the
regret is identically zero. Otherwise, define the normalized utility
\begin{align}
    \widetilde{\mathsf{U}}(\eta)
    \triangleq
    \frac{
        \mathsf{U}(\eta)-U_{\min}
    }{
        R_U
    }.
    \label{eq:app-normalized-utility}
\end{align}
Then
\begin{align}
    0
    \leq
    \widetilde{\mathsf{U}}(\eta)
    \leq
    1.
    \label{eq:app-normalized-utility-range}
\end{align}
The corresponding optimal value is
\begin{align}
    \widetilde{\mathsf{U}}^*
    =
    \frac{
        \mathsf{U}^*_{\Lambda_{\mathsf{DC}}}-U_{\min}
    }{
        R_U
    }.
    \label{eq:app-normalized-optimum}
\end{align}
Therefore, the normalized gap satisfies
\begin{align}
    \widetilde{\Gamma}(\eta)
    &\triangleq
    \widetilde{\mathsf{U}}^*
    -
    \widetilde{\mathsf{U}}(\eta)
    \nonumber\\
    &=
    \frac{
        \mathsf{U}^*_{\Lambda_{\mathsf{DC}}}
        -
        \mathsf{U}(\eta)
    }{
        R_U
    }
    =
    \frac{\Gamma(\eta)}{R_U}.
    \label{eq:app-normalized-gap}
\end{align}
Thus, for any policy $\pi$,
\begin{align}
    \widetilde{\mathsf{Reg}}_T(\pi)
    =
    \frac{1}{R_U}
    \mathsf{Reg}_T(\pi).
    \label{eq:app-normalized-regret}
\end{align}
A regret bound for the normalized utility therefore gives the
corresponding original-scale regret bound after multiplication by
$R_U$. Hence, assuming \eqref{eq:app-normalized-U} does not change
the substance of the result.

\subsection{Metric Normalization}
\label{subsec:app-metric-normalization}

Theorem~\ref{thm:red-main-zooming} assumes that the metric space has
diameter at most one:
\begin{align}
    \operatorname{diam}
    \left(
        \Lambda_{\mathsf{DC}},
        \mathcal{D}
    \right)
    \leq
    1.
    \label{eq:app-diameter-normalized}
\end{align}
This is only a normalization.

Suppose first that we have a metric $\mathcal{D}_0$ such that
\begin{align}
    |\mathsf{U}(\eta)-\mathsf{U}(\eta')|
    \leq
    \mathcal{D}_0(\eta,\eta'),
    \qquad
    \forall \eta,\eta'\in\Lambda_{\mathsf{DC}}.
    \label{eq:app-D0-lipschitz}
\end{align}
If the diameter of $\mathcal{D}_0$ is larger than one, define
\begin{align}
    \mathcal{D}(\eta,\eta')
    \triangleq
    \min\{
        \mathcal{D}_0(\eta,\eta'),
        1
    \}.
    \label{eq:app-normalized-metric}
\end{align}
Then $\mathcal{D}$ is again a metric. This is due the fact that for any $a,b\geq0$, we have
\begin{align}
    \min\{a+b,1\}
    \leq
    \min\{a,1\}
    +
    \min\{b,1\}.
    \label{eq:app-min-triangle}
\end{align}
Indeed, if $a<1$ and $b<1$, then the right-hand side is
$a+b$, and the inequality is immediate. If either $a\geq1$
or $b\geq1$, then the right-hand side is at least one, while
the left-hand side is at most one.
Therefore, by this new construction,
\begin{align}
    \operatorname{diam}
    \left(
        \Lambda_{\mathsf{DC}},
        \mathcal{D}
    \right)
    \leq
    1.
    \label{eq:app-normalized-diameter}
\end{align}
Moreover, the Lipschitz condition is preserved. If
$\mathcal{D}_0(\eta,\eta')<1$, then
$\mathcal{D}(\eta,\eta')=\mathcal{D}_0(\eta,\eta')$, so
\eqref{eq:app-D0-lipschitz} gives the result. If
$\mathcal{D}_0(\eta,\eta')\geq1$, then
$\mathcal{D}(\eta,\eta')=1$. Since the utility is normalized,
$0\leq\mathsf{U}(\eta)\leq1$, we have
\begin{align}
    |\mathsf{U}(\eta)-\mathsf{U}(\eta')|
    \leq
    1
    =
    \mathcal{D}(\eta,\eta').
    \label{eq:app-lipschitz-large-distance}
\end{align}
Therefore, replacing $\mathcal{D}_0$ by $\mathcal{D}$ does not change
the validity of the Lipschitz assumption.

\subsection{Why the Zooming Dimension Appears}
\label{subsec:app-why-zooming}

Recall that for each  $r\in(0,1]$, the zooming dimension examines
the gap annulus
\begin{align}
    \Lambda_r
    =
    \left\{
        \eta\in\Lambda_{\mathsf{DC}}
        :
        \frac{r}{2}
        <
        \Gamma(\eta)
        \leq
        r
    \right\}.
    \label{eq:app-Lambda-r-repeat}
\end{align}
If $\Lambda_r$ can be covered by at most $C_zr^{-d_z}$ sets of
diameter at most $r/8$, then $d_z$ measures the effective
dimension of the region where the regret gap is of order $r$.
Smaller $d_z$ means fewer distinguishable near-optimal thresholds
and a faster regret rate. The following examples compute $d_z$ explicitly.

\begin{example}
\label{ex:app-dz-upper-one}
Let
\begin{align}
    X=[a,b]\subset\mathbb{R},
    \qquad
    \mathcal{D}(x,y)
    =
    \min\{L|x-y|,1\}.
    \label{eq:app-ex-upper-metric}
\end{align}
For any utility $\mu:X\to[0,1]$ that is Lipschitz with
respect to $\mathcal{D}$, the zooming dimension is at most
one. Indeed, for any $r\in(0,1]$,
\begin{align}
    X_r
    \subseteq
    X.
    \label{eq:app-ex-upper-subset}
\end{align}
Partition $[a,b]$ into intervals of Euclidean length at most
$r/(8L)$. Each such interval has $\mathcal{D}$-diameter at most
$r/8$. The number of intervals is at most
\begin{align}
    \left\lceil
        \frac{8L(b-a)}{r}
    \right\rceil
    \leq
    \left(8L(b-a)+1\right)r^{-1}.
    \label{eq:app-ex-upper-cover}
\end{align}
Therefore, the covering condition holds with $d_z=1$ and
$C_z=8L(b-a)+1$. Hence the exact zooming dimension is always
bounded above by one under the scaled absolute-value metric.
\end{example}

\begin{example}
\label{ex:app-linear-gap}
Let
\begin{align}
    X=[0,1],
    \qquad
    \mathcal{D}(x,y)=|x-y|,
    \label{eq:app-ex-linear-metric}
\end{align}
and let
\begin{align}
    \mu(x)=1-x.
    \label{eq:app-ex-linear-utility}
\end{align}
Then $\mu^*=1$ and
\begin{align}
    \Delta(x)=\mu^*-\mu(x)=x.
    \label{eq:app-ex-linear-gap}
\end{align}
For $r\in(0,1]$,
\begin{align}
    X_r
    =
    \left\{
        x\in[0,1]:
        \frac{r}{2}<x\leq r
    \right\}
    =
    \left(\frac{r}{2},r\right].
    \label{eq:app-ex-linear-annulus}
\end{align}
This interval has length $r/2$. It can be covered by at most
four intervals of length $r/8$. Therefore, the covering number is
bounded by a constant independent of $r$, so
\begin{align}
    d_z=0.
    \label{eq:app-ex-linear-dz}
\end{align}
\end{example}

\begin{example}
\label{ex:app-polynomial-gap}
Let
\begin{align}
    X=[0,1],
    \qquad
    \mathcal{D}(x,y)=|x-y|,
    \label{eq:app-ex-poly-metric}
\end{align}
and let, for $p\geq1$,
\begin{align}
    \mu(x)
    =
    1-\frac{x^p}{p}.
    \label{eq:app-ex-poly-utility}
\end{align}
Then $\mu$ is $1$-Lipschitz on $[0,1]$ because
\begin{align}
    |\mu'(x)|=x^{p-1}\leq1.
    \label{eq:app-ex-poly-lipschitz}
\end{align}
The gap is
\begin{align}
    \Delta(x)
    =
    \frac{x^p}{p}.
    \label{eq:app-ex-poly-gap}
\end{align}
The annulus $X_r$ is
\begin{align}
    X_r
    =
    \left\{
        x\in[0,1]:
        \frac{r}{2}
        <
        \frac{x^p}{p}
        \leq
        r
    \right\}.
    \label{eq:app-ex-poly-annulus-start}
\end{align}
Equivalently,
\begin{align}
    X_r
    =
    \left(
        \left(\frac{pr}{2}\right)^{1/p},
        (pr)^{1/p}
    \right].
    \label{eq:app-ex-poly-annulus}
\end{align}
Its Euclidean length is
\begin{align}
    |X_r|
    =
    p^{1/p}
    \left(
        1-2^{-1/p}
    \right)
    r^{1/p}.
    \label{eq:app-ex-poly-length}
\end{align}
A set of $\mathcal{D}$-diameter at most $r/8$ has Euclidean
length at most $r/8$. Therefore, the covering number is on the
order of
\begin{align}
    \frac{r^{1/p}}{r}
    =
    r^{-\left(1-\frac{1}{p}\right)}.
    \label{eq:app-ex-poly-covering}
\end{align}
Thus the exact zooming dimension is
\begin{align}
    d_z
    =
    1-\frac{1}{p}.
    \label{eq:app-ex-poly-dz}
\end{align}

\end{example}

\begin{example}
\label{ex:app-flat-exp}
Let
\begin{align}
    X=[0,1],
    \qquad
    \mathcal{D}(x,y)=|x-y|,
    \label{eq:app-ex-exp-metric}
\end{align}
and define the gap by
\begin{align}
    \Delta(0)=0,
    \qquad
    \Delta(x)=e^{-1/x},
    \quad x>0.
    \label{eq:app-ex-exp-gap}
\end{align}
Let $\mu(x)=1-\Delta(x)$. For small $r$, the annulus is
\begin{align}
    X_r
    =
    \left\{
        x:
        \frac{r}{2}
        <
        e^{-1/x}
        \leq
        r
    \right\}.
    \label{eq:app-ex-exp-annulus-start}
\end{align}
Solving $e^{-1/x}=r$ gives
\begin{align}
    x=\frac{1}{\ln(1/r)}.
    \label{eq:app-ex-exp-endpoint-one}
\end{align}
Solving $e^{-1/x}=r/2$ gives
\begin{align}
    x=\frac{1}{\ln(2/r)}.
    \label{eq:app-ex-exp-endpoint-two}
\end{align}
Thus
\begin{align}
    X_r
    =
    \left(
        \frac{1}{\ln(2/r)},
        \frac{1}{\ln(1/r)}
    \right].
    \label{eq:app-ex-exp-annulus}
\end{align}
The length of this interval is
\begin{align}
    |X_r|
    =
    \frac{\ln 2}{
        \ln(1/r)\ln(2/r)
    }.
    \label{eq:app-ex-exp-length}
\end{align}
Covering it by intervals of length $r/8$ requires on the order of
\begin{align}
    \frac{1}{r\ln^2(1/r)}
    \label{eq:app-ex-exp-cover}
\end{align}
sets. This is at most a constant times $r^{-1}$, so $d_z=1$ is
sufficient. However, for any $d<1$,
\begin{align}
    \frac{
        1/(r\ln^2(1/r))
    }{
        r^{-d}
    }
    =
    \frac{
        r^{d-1}
    }{
        \ln^2(1/r)
    }
    \to
    \infty
    \quad
    \text{as }r\downarrow0.
    \label{eq:app-ex-exp-lower}
\end{align}
Therefore no $d<1$ is sufficient, and the exact zooming
dimension is
\begin{align}
    d_z=1.
    \label{eq:app-ex-exp-dz}
\end{align}
\end{example}

\begin{example}[Snowflake metric can give $d_z>1$]
\label{ex:app-snowflake}
Let
\begin{align}
    X=[0,1],
    \qquad
    \mathcal{D}_a(x,y)=|x-y|^a,
    \qquad
    0<a<1.
    \label{eq:app-ex-snowflake-metric}
\end{align}
This is a valid metric. Let
\begin{align}
    \mu(x)=1-x.
    \label{eq:app-ex-snowflake-utility}
\end{align}
Then
\begin{align}
    |\mu(x)-\mu(y)|
    =
    |x-y|
    \leq
    |x-y|^a
    =
    \mathcal{D}_a(x,y),
    \label{eq:app-ex-snowflake-lipschitz}
\end{align}
so $\mu$ is $1$-Lipschitz with respect to $\mathcal{D}_a$.
The gap is $\Delta(x)=x$, so
\begin{align}
    X_r
    =
    \left(\frac{r}{2},r\right].
    \label{eq:app-ex-snowflake-annulus}
\end{align}
This annulus has Euclidean length $r/2$. A set with
$\mathcal{D}_a$-diameter at most $r/8$ has Euclidean diameter at
most
\begin{align}
    \left(\frac{r}{8}\right)^{1/a}.
    \label{eq:app-ex-snowflake-euclidean-size}
\end{align}
Therefore, the number of sets required is on the order of
\begin{align}
    \frac{r}{r^{1/a}}
    =
    r^{-\left(\frac{1}{a}-1\right)}.
    \label{eq:app-ex-snowflake-cover}
\end{align}
Hence
\begin{align}
    d_z
    =
    \frac{1}{a}-1.
    \label{eq:app-ex-snowflake-dz}
\end{align}
For example, if $a=1/3$, then $d_z=2$. 
\end{example}

\end{document}